\theoremstyle{plain}
\newtheorem{lemma}{Lemma}
\newtheorem{thm}[lemma]{Theorem}
\newtheorem{cor}[lemma]{Corollary}
\newtheorem{problem}[lemma]{Problem}
\DeclareMathOperator{\trace}{Tr}
\renewcommand{\epsilon}{\varepsilon}
\newcommand{\E}{\mathcal{E}}
\newcommand{\N}{\mathbb{N}}
\newcommand{\C}{\mathbb{C}}
\newcommand{\1}{{\openone}}
\newcommand{\be}{\begin{equation}}
\newcommand{\ee}{\end{equation}}
\newcommand{\bea}{\begin{eqnarray}}
\newcommand{\eea}{\end{eqnarray}}
\newcommand{\ra}{\rangle}
\newcommand{\la}{\langle}
\newcommand{\mc}{\mathcal}
\newcommand{\mi}{\mathrm{i}}
\newcommand{\ketbra}[3][]{\mathinner{|#2\rangle\langle #3|}_{#1}}
\newcommand{\bra}[2][]{\mathinner{\langle #2|}_{#1}}
\newcommand{\ket}[2][]{\mathinner{|#2\rangle}_{\hspace{-0.1em}#1}}
\DeclareMathOperator{\tr}{Tr}
\begin{document}



\title{Fundamental limitations in the purifications of tensor networks}

\author{G. De las Cuevas}
\address{Max Planck Institute for Quantum Optics, Hans-Kopfermann-Str.~1, 85748 Garching, Germany}

\author{T. S. Cubitt} 
\address{Department of Computer Science, University College London, London WC1E 6BT, UK, and\\
DAMTP, University of Cambridge, Centre for Mathematical Sciences,
Wilberforce Road, Cambridge CB3 0WA, UK}

\author{J. I. Cirac} 
\address{Max Planck Institute for Quantum Optics, Hans-Kopfermann-Str.~1, 85748 Garching, Germany}

\author{M. M. Wolf}
\address{Zentrum Mathematik, Technische Universit\"at M\"unchen, 85748 Garching, Germany}

\author{D. P\'erez-Garc\'ia}
\address{Departamento de An\'alisis Matem\'atico and IMI, Universidad Complutense de Madrid, 28040 Madrid, Spain\\
and ICMAT, C/ Nicol\'as Cabrera, Campus de Cantoblanco, 28049 Madrid, Spain}

\begin{abstract}
We show a fundamental limitation in the description of quantum many-body mixed states with tensor networks in purification form. 
Namely, we show that there exist mixed states which can be represented as a translationally invariant (TI) matrix product density operator (MPDO) valid for all system sizes, but for which there does not exist a TI purification valid for all system sizes.
The proof is based on an undecidable problem and on the uniqueness of canonical forms of matrix product states. 
The result also holds for classical states.  
\end{abstract}

\maketitle


\section{Introduction and main result}

In recent years, Tensor Network States (TNS) have become a major tool in the study of quantum many-body systems. 
Because of their ability to capture the entanglement structure present in grounds states \cite{Ha07,Sc11d,Ve08}, thermal states \cite{Mo14}, low energy states within a phase \cite{Ha13c}, or in a phase transition \cite{Vi07}, 
they constitute a powerful ansatz to describe and simulate strongly correlated quantum systems in an efficient way. 
They provide  an accurate description of the target states with a number of parameters that only grows polynomially with the size of the system, overcoming the exponential blow-up of the total Hilbert space of the system. 

Here we will restrict to a particular class of TNS for the description of mixed states in one spatial dimension (1D), the so-called Matrix Product Density Operators (MPDO) \cite{Ve04d,Zw04}. 
In the classical case (i.e.~for states diagonal in the computational basis), they  can be written as 
\be
\rho_A^L=\sum_{i_1\ldots i_L}\tr(A_{i_1}\cdots A_{i_L})\ketbra{i_1\cdots i_L}{i_1\cdots i_L}, 
\label{eq:rhoA}
\ee
where $i_j=1,\ldots , d$ and $A_i\in \mc{M}_D$, where the latter denotes the set of  $D\times D $ complex matrices. 
MPDOs play a chief role in the study of  2D systems, since they describe the so-called {\it boundary theory}, which encodes the relevant properties of the system. For this reason, they are the objects based on which one classifies the phases and phase transitions in 2D models \cite{Ci15}.
They are also relevant in the study of 1D open quantum systems  \cite{Pr08,Cu15}, in the very same way as their pure analogues ---Matrix Product States (MPS) \cite{Fa92,Sc11d,Ve08}--- are central in the study of closed 1D systems.


Ever since MPDOs were introduced, a major problem became clear: It was not easy to detect in the {\it local} matrices $A_i$ whether the global quantum state $\rho_A^L$ was positive semidefinite, as desired. This still constitutes at  one of the difficulties in the contraction of Projected Entangled Pair States (PEPS) in 2 dimensions \cite{Lu13,Lu14,We14}, and in the simulation of 1D open systems. 


In order to circumvent this problem, a particular type of MPDO was introduced \cite{Ve04d}, in which positivity was guaranteed by construction: the so-called {\it purification form}, given by a MPS with a local environment at each site, 
$$
|\Psi\rangle=\sum_{i,e} \trace(B_{i_1,e_1}B_{i_2,e_2}\cdots B_{i_L,e_L})|i_1e_1 i_2e_2\cdots i_Le_L\rangle, 
$$
where $B_{i,e}\in \mc{M}_{\tilde{D}}$. 
The resulting (unnormalized) mixed state emerges by tracing all the local environments:
\be\label{eq:purification-form}
\sigma_B^L=\trace_{e_1,\ldots e_L}|\Psi\rangle\langle \Psi|.
\ee
Note that in this case $\sigma_B^L\geq 0$ \cite{fn1} for all $L$ \emph{by construction}, whereas $\rho_A^L\geq 0$ for all $L$ \emph{by assumption}.

It is not difficult to see that, for fixed $L$ and matrices $A_i$, one can always find matrices $B_{i,e}$ of finite size $\tilde{D}$ for which $\rho_A^L/\tr(\rho_A^L)=\sigma_B^L/\tr(\sigma_B^L)$. What is not clear is whether this can be done simultaneously for all $L$. In this work we will show that, in general, this {\it cannot} be done:

\begin{thm}[Main result]
\label{thm:main}
There exist MPDOs $\{\rho_A^L\}_L$ given by matrices $(A_i)$ (meaning  that $\rho_A^L\ge 0$ for all $L$) for which there does not exist another MPDO in purification form $\{\sigma_B^L\}_L$ (see Eq.~\eqref{eq:purification-form}) such that 
\be
\label{eq:main-thm}
\frac{\rho_A^L}{\trace(\rho_A^L)}=\frac{\sigma_B^L}{\trace(\sigma_B^L)}\quad \forall  L .
\ee 
The result holds true even if we restrict ourselves to $\{A_i\}_{i=1}^7$, all $A_i$  rational-valued $7\times 7$ matrices, and the state $\rho_A^L$  a classical state (i.e.~diagonal in the computational basis) for all $L$.
\end{thm}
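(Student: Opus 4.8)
The plan is to reduce the existence of a translationally invariant purification valid for all $L$ to the decidability of some problem that we already know (or can arrange) to be undecidable, so that the mere existence of the reduction yields the claimed impossibility. Concretely, I would start from a known undecidable problem encoded in the matrices $(A_i)$ --- the natural candidate is the emptiness/reachability problem for matrix products or the mortality-type problem for a small set of rational matrices, which can be pushed down to $7$ matrices of size $7\times 7$ by the usual tight encodings --- and arrange the $A_i$ so that $\rho_A^L \ge 0$ holds unconditionally (for instance by taking the $A_i$ to define a classical process, i.e.\ nonnegative weights $\tr(A_{i_1}\cdots A_{i_L}) \ge 0$ on diagonal entries, which is exactly the classical-state restriction stated in the theorem), while some qualitative feature of the family $\{\rho_A^L\}_L$ --- e.g.\ whether a particular probability vanishes for some $L$, or the asymptotic rank/correlation behaviour --- depends on the undecidable data.

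The core structural step is to show that if a TI purification $\{\sigma_B^L\}_L$ existed, then from the matrices $B_{i,e}$ one could \emph{compute} the answer to that undecidable question, contradiction. Here is where I would use the uniqueness of canonical forms of MPS (the ingredient flagged in the abstract): tracing out the environments of the purification MPS $|\Psi\rangle$ expresses $\rho_A^L/\tr(\rho_A^L)$ as an MPDO built from the tensor $\sum_e B_{i,e}\otimes \overline{B_{j,e}}$, and comparing this with the given MPDO $\rho_A^L$ for all $L$ forces, via the uniqueness of the canonical form (up to gauge, block structure, and the usual junk), an algebraic relation between the $B$'s and the $A$'s. That relation is finite data over $\overline{\mathbb{Q}}$ (or $\mathbb{Q}$ if we are careful), hence the purifiability question becomes a decidable statement about the $A_i$ --- but we have chosen the $A_i$ precisely so that the associated question is not decidable. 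The contrapositive is exactly Theorem~\ref{thm:main}: there must be instances $(A_i)$ for which no such $\{\sigma_B^L\}_L$ exists.

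I expect the main obstacle to be the \emph{canonical-form matching step}: a purification $\sigma_B$ of a \emph{classical} state need not itself be classical, and need not be in canonical form, so one has to control the block-decomposition and the possible periodicity/degeneracy of the transfer operator of $\sum_e B_{i,e}\otimes\overline{B_{i,e}}$ and of $\rho_A$, and show that equality of the normalized states for all $L$ still pins down enough of the algebra generated by the $B$'s. The TI constraint is doing real work here (it is false for fixed $L$, as the excerpt notes), so the argument must genuinely use ``for all $L$''; quantitatively this means passing from equality of finitely many traces $\tr(\cdots)$ to equality of the minimal MPS/MPDO representations, which is where the cited uniqueness theorems enter. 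Secondary, more bookkeeping-level obstacles are: (i) verifying that the chosen undecidable encoding really fits in $7$ rational $7\times 7$ matrices while keeping $\rho_A^L\ge0$ automatic, and (ii) ruling out ``cheap'' purifications with growing-in-$L$ bond dimension --- but in the TI-for-all-$L$ setting a single finite tensor $B_{i,e}$ must work uniformly, which is what makes the reduction go through.
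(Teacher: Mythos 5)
Your high-level ingredients (an undecidable matrix problem plus uniqueness of MPS canonical forms) match the paper's, but the logical architecture of your reduction has a genuine gap, located exactly where you place the least emphasis. You propose to arrange the $A_i$ so that $\rho_A^L\ge 0$ holds \emph{unconditionally} and to encode the undecidable data in some other feature of the family (vanishing of a probability, asymptotic correlations, \dots), and then to argue that a purification would let one ``compute the answer.'' But the only property that a purification-form MPDO certifies is positivity: $\sigma_B^L\ge 0$ for all $L$ by construction. If positivity of $\rho_A^L$ is automatic, the existence or non-existence of a purification carries no information about any other undecidable feature, and there is no mechanism by which the tensor $B$ would let you decide anything about the data hidden inside the $A_i$. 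The paper instead makes positivity itself the undecidable property: it proves, via a reduction from the ``zero in the upper-left corner'' problem of Cassaigne et al., that the promise problem ``either $\rho_A^L\ge0$ for all $L$, or $\rho_A^L\not\ge0$ for all $L\ge L_0$'' is undecidable already for seven rational $7\times7$ matrices with $\rho_A^L$ classical. This is the step your plan explicitly discards, and without it the contradiction cannot close.

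The second structural issue is how the contradiction is derived. ``From the matrices $B_{i,e}$ one could compute the answer'' is not a reduction: for a single instance the answer is one bit, and undecidability only forbids a \emph{uniform} algorithm taking the $A_i$ as input. The paper's argument is a two-sided semi-decidability (dovetailing) argument: non-positivity is semi-decidable (diagonalize $\rho_A^L$ for $L=1,2,\dots$ until a negative eigenvalue appears), and \emph{if} every positive TI MPDO admitted a TI purification, positivity would also be semi-decidable --- one searches over bond dimensions $D'$, and for each fixed $D'$ the existence of a purification matching $\rho_A^L$ up to the explicit length $D!\,D'!\,3(D+D')^6$ is a finite system of polynomial equations over $\mathbb{C}$, hence decidable; the canonical-form machinery (quantum Wielandt, linear independence of injective blocks, and a Newton-identity moment argument) guarantees that agreement up to this finite length propagates to infinitely many $L$, so that a match certifies Case~1 under the promise. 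Running both searches in parallel decides the promise problem, contradicting its undecidability. Your instinct that the canonical form converts ``for all $L$'' into finitely many algebraic conditions is correct and is indeed the content of the paper's MPS proportionality theorem, but it must be deployed inside this search-for-a-certificate algorithm, not as a way of extracting the undecidable answer from a given $B$.
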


The proof technique is somehow non-standard since it relies on the notion of  undecidability. In \cref{sec:undec} we will show that the property 
\be
(P) \qquad \rho_A^L\ge 0 \;\; \forall L
\ee
is undecidable, in the same sense as the halting problem of a Turing machine. 
The fact that this type of problems can be undecidable was recently observed in \cite{Kl14}, where  the analogue problem for open  boundary conditions is considered. 
Note that separations between MPDOs and their purification form were given in \cite{De13c}, but in the non translationally invariant case. 
In \cref{sec:mps} we will prove some results about the canonical forms of MPS, which 
essentially show that if a purification $\{\sigma_{B}^{L}\}_{L}$  satisfying \eqref{eq:main-thm}  exists, then it can be found in a finite amount of time. 
Finally, we gather all results in \cref{sec:proof} to prove \cref{thm:main}. 
We conclude and give some outlook in \cref{sec:conclusions}.

\section{Undecidability of the positivity problem}
\label{sec:undec}

We consider the following problem related to positivity of density matrices.

\begin{problem}
\label{prob:undec}
Given $\{A_i\in \mathbb{Z}^{7\times 7}\}_{i=1}^7$, one is promised that either 
\begin{description}
\item[Case 1.] 
\label{1} $\rho_A^L\ge 0$ for all $L$; or 
\item[Case 2.] 
\label{2}
There exists $L_0$ such that $\rho_A^L\not \ge 0$ for all $L\ge L_0$.
\end{description}
Decide which case holds.
\end{problem}

\begin{lemma}
\label{lem:undec}
\cref{prob:undec} is undecidable. 
\end{lemma}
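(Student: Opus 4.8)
The plan is to reduce from the halting problem by encoding the run of a Turing machine into the positivity of $\rho_A^L$. The key idea is that $\tr(A_{i_1}\cdots A_{i_L})$, summed appropriately, produces a polynomial-like quantity in $L$, and positivity of a classical (diagonal) state $\rho_A^L$ reduces to checking nonnegativity of a finite set of numbers — one per computational basis string — each of which has the form $\tr(A_{i_1}\cdots A_{i_L})$. So the real content is to design the matrices $A_i$ so that one particular diagonal entry, say $\tr(M_L)$ for a fixed product matrix $M = A_{i^*}$ of a distinguished symbol, encodes "Turing machine $T$ has \emph{not} halted within $L$ steps." If $T$ never halts, that entry stays nonnegative for all $L$ (Case 1); if $T$ halts at step $L_0$, the entry becomes negative from $L_0$ on (Case 2). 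Since the halting problem is undecidable and this is a valid many-one reduction, \cref{prob:undec} is undecidable.

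Concretely, I would proceed as follows. First, recall (or cite, e.g.\ from the matrix-semigroup / weighted-automaton literature, or the construction in \cite{Kl14}) that there is a fixed universal-type construction turning a Turing machine into a small set of nonnegative integer matrices whose products' traces track the computation: one can arrange that a scalar sequence $p_L := c^\mathsf{T} M^L b$ (equivalently a single trace $\tr(N M^L)$ after absorbing $b,c$ into block structure) satisfies $p_L \geq 0$ for all $L$ iff the machine does not halt, and $p_L < 0$ for all $L \geq L_0$ once it halts — the "$<0$ forever after" behaviour being obtained by a standard trick where, past the halting time, a geometric/exponential term with negative sign dominates. Second, I would show how to realize such a scalar sequence inside a \emph{translationally invariant, classical} MPDO of the form \eqref{eq:rhoA}: take the bond dimension $D=7$ and physical dimension $d=7$, let $A_i$ be block-structured so that each basis configuration $i_1\cdots i_L$ picks out a trace of a product of the $A_i$'s, and use additional "padding" symbols and blocks so that all diagonal entries except the designated one are manifestly nonnegative (e.g.\ built from nonnegative matrices), while the designated entry equals $p_L$ (up to an overall positive constant). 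The dimension bookkeeping — squeezing the TM simulation, the sign trick, and the padding into $7\times7$ rational (indeed integer) matrices with $7$ physical levels — is where the stated constants come from and would need the detailed encoding.

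The third step is to verify the reduction is correct and computable: given a TM $T$, one can effectively write down the matrices $\{A_i\}_{i=1}^7$; if $T$ does not halt we are in Case 1, and if $T$ halts we are in Case 2, and these are mutually exclusive, matching the promise structure of \cref{prob:undec}. Hence a decision procedure for \cref{prob:undec} would decide halting, a contradiction.

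The main obstacle I expect is the second step: getting a \emph{translationally invariant} encoding (the same matrices $A_i$ at every site, with $\rho_A^L$ making sense and being positive for all $L$ precisely in the non-halting case) with small, rational entries, while simultaneously ensuring that \emph{every} other diagonal entry of $\rho_A^L$ stays nonnegative so that positivity of the whole state is equivalent to nonnegativity of the single designated entry. Off-the-shelf undecidability results for matrix products (mortality, zero-in-the-corner, etc.) are usually stated for \emph{fixed} words or for existential quantification over words, not for "for all $L$, the $L$-step trace is nonnegative," and not with the classical-MPDO positivity wrapper; bridging that gap — likely by composing a known small undecidable matrix problem with the sign-amplification trick and a careful TI block construction — is the crux. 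Everything else (computability of the map, exclusivity of the two cases, reading off the constants $7,7,7$) is routine once the encoding is in place.
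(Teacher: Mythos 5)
Your plan has a genuine gap, and it sits exactly where you locate "the crux." You propose to carry the undecidability in a \emph{single} diagonal entry, namely the one indexed by the constant word $(i^*)^L$, so that the designated quantity is $\tr(N M^L)$ for a fixed matrix $M=A_{i^*}$, and you assert one can "recall or cite" a construction making $\tr(N M^L)\ge 0$ for all $L$ iff a given Turing machine never halts. No such construction is known: deciding whether a sequence of the form $c^{\mathsf T}M^L b$ (equivalently $\tr(N M^L)$) is nonnegative for all $L$, or negative for all large $L$, is the (Ultimate) Positivity Problem for linear recurrence sequences, a long-standing \emph{open} problem that is decidable for low orders and not known to be undecidable for any order. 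Encoding halting into it would itself be a major result, so step one of your reduction has no foundation. The undecidability in this setting must instead be carried by the quantification over all $d^L$ words $i_1\cdots i_L$ (an exponential family of diagonal entries), i.e.\ by the question "does there \emph{exist} a word whose trace is negative," which is the kind of existential matrix-semigroup question (mortality, zero-in-the-corner, scalar reachability) that genuinely is undecidable.

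That is what the paper does: it reduces from the Zero-in-the-Upper-Left-Corner problem for five integer $3\times3$ matrices $Y_1,\dots,Y_5$ \cite{Ca14}, setting $A_i=\mathcal{O}P(Y_i\otimes Y_i)P\mathcal{O}^\dagger\oplus\ketbra{6}{6}$ plus two "selector" symbols $A_6,A_7$ whose auxiliary block contributes $+1$ and $-1$ respectively. Two mechanisms then do the work you leave unresolved: (a) restricting $Y_i\otimes Y_i$ to the symmetric subspace makes the trace of any selector-free word equal to $\frac12[(\tr Y_w)^2+\tr(Y_w^2)]+1$, which is nonnegative because the $Y_i$ come with the promise of joint upper-triangularizability (so $\tr(Y_w^2)\ge 0$); and (b) a word containing the symbol $7$ has trace $(\la 0|Y_w|0\ra)^2-1$, which by integrality is negative iff $\la 0|Y_w|0\ra=0$. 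The "for all $L\ge L_0$" clause of Case 2 is then obtained simply by padding a bad word with the symbol $6$. Your closing remark about "composing a known small undecidable matrix problem with a sign trick" points in the right direction, but without the tensor-square/symmetric-subspace device (or something playing its role) you cannot guarantee that \emph{all} other diagonal entries stay nonnegative, and without the integrality gap you cannot convert "some trace is negative" back into an instance of the undecidable source problem; as written, the proposal does not constitute a proof.
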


\begin{proof}
The proof of  \cref{lem:undec} relies on the recent proof that the {\it Zero in the Upper Left Corner} (ZULC) problem is undecidable for $5$  integer matrices of size $3\times 3$ \cite{Ca14}. (In \cite{Ca14} they consider rational-valued matrices, but by multiplying all of them by the least common multiple of all denominators involved, one trivially obtains the result for integer-valued matrices.)

In such ZULC problem, one is given five $3\times 3$ matrices $Y_1,\ldots Y_5$ and one is asked to decide whether there exists an $N\in \N$ and a sequence $i_1,\ldots i_N\in \{1,\ldots 5\}$ so that $$\bra{0}Y_{i_1}\cdots Y_{i_N}\ket{0}=0\,.$$
In Ref.~\cite{Ca14} it was shown that this problem is undecidable even with the additional promise that $Y_1,\ldots Y_5$ can be jointy upper-triangularized, that is, there exists a non-singular matrix $Q$ so that, all $QY_iQ^{-1}$ are upper-triangular.
In the following we reduce this problem to \cref{prob:undec}.  

Consider the projector onto the 6-dimensional symmetric subspace $P:=\frac{1}{2}(\openone + \mathbb{F})$, where $\openone$ is the identity matrix of size $3\times 3$, and  $\mathbb{F}$ the `flip' operator, 
\be
\mathbb{F}=\sum_{i,j=0,1,2}| i,j\ra\la j,i|.
\ee 
Let $\mathcal{O}$ be a real isometry of size $6\times 9$ such that 
$\mathcal{O}\mathcal{O}^\dagger = \openone$ (of size $6\times 6$)
and
$\mathcal{O}^\dagger\mathcal{O} = P$. 
We define 
\be
A_i= \mathcal{O} P (Y_i\otimes Y_i) P \mathcal{O}^\dagger \oplus \ketbra{6}{6}, \quad \textrm{for }i=1,\ldots, 5,
\label{eq:Ai}
\ee 
where $\mathcal{O} P (Y_i\otimes Y_i) P \mathcal{O}^\dagger$ acts on 
 $\C^6={\rm span}\{\ket{0},\ldots,\ket{5}\}$. 
 We also define 
\bea
&&A_6=(\ketbra{0}{0}\otimes \ketbra{0}{0})\oplus \ketbra{6}{6},\label{eq:A6}\\
&&A_7=(\ketbra{0}{0}\otimes \ketbra{0}{0})\oplus (- \ketbra{6}{6}).\label{eq:A7}
\eea
Note that $A_{i}\in \mathbb{Q}^{7\times 7}$ for all $i$.

We claim  that there is a word $j_1,\ldots, j_m$ such that 
\be
\la 0 | Y_{j_1} Y_{j_2} \ldots Y_{j_m}|0\ra =0
\label{eq:wordzulc}
\ee
if and only if $\{A_i\}$ (as defined in \eqref{eq:Ai},\eqref{eq:A6},\eqref{eq:A7}) is in Case 2 of \cref{prob:undec}. 
Our proof will also show that $\{A_i\}$ satisfies the promise of \cref{prob:undec} (i.e.~they belong either to Case 1 or 2).

For the ``only if'' direction, take the word $(j_1,\ldots, j_m)$ that satisfies \eqref{eq:wordzulc},  and define  $(i_1, \ldots, i_{L_0})=(j_1,\ldots, j_m,7)$. 
Using that $P\mathcal{O}^\dagger |0,0\ra = |0,0\ra $, that $P$ commutes with $Y_i\otimes Y_i$, and basic properties of the trace, we obtain that 
\be
\tr(A_{i_1}\ldots A_{i_{L_0}}) = \left(\la 0 |Y_{j_1}\ldots Y_{j_m}|0\ra \right)^2 - 1,
\label{eq:negtrace} 
\ee
which is negative. 
Moreover, appending  an arbitrary number of 6 at the end of $i_1, \ldots, i_{L_0}$ gives words of arbitrary length $L\geq L_0$ with a negative trace as  in \eqref{eq:negtrace}. 
Thus $\rho_A^L\ngeq 0 $ for all $L\geq L_0$.

Conversely, assume that  there is a word $i_1,\ldots, i_L$ (for some $L\geq L_0$) such that $\tr(A_{i_1}\ldots A_{i_L})<0$. 
Assume first that this word contains no 6 or 7. 
Using the above-mentioned properties of $\mathcal{O}$ and $P$, and that $\tr(\mathbb{F}(G\otimes G)) = \tr(G^2)$ for any  matrix $G$, we obtain 
\begin{equation}
\tr(A_{i_1}\ldots A_{i_L}) = 
\frac{1}{2}\left[
\left(\tr\left( Y_{i_1}\ldots Y_{i_L} \right)\right)^2 + \tr\left(\left(Y_{i_1} \ldots Y_{i_L}\right)^2\right)\right] + 1 .
\label{eq:no67}
\end{equation}
The first term in square brackets is clearly non-negative. The second is non-negative because, as mentioned above, we can assume without loss of generality that the matrices $Y_i$ can be simultaneously put in upper triangular form. Thus the word must contain at least a 6 or a 7. 

If it only contains one 6, imagine it is in the last position of the word (which can always be achieved by using cyclicity of the trace). We obtain
\be
\tr(A_{i_1}\ldots A_{i_{L}}) =\left(\la 0| Y_{i_1}\ldots Y_{i_{L-1}}|0\ra \right)^2 + 1,
\ee
which is always positive. An analogous argument holds if there are multiple 6s. 

Hence the word must contain at least one 7, and imagine again that there is just one 7 and this is the last element of the word. 
In this case, 
\be
\tr(A_{i_1}\ldots A_{i_{L}}) = \left(\la 0 |Y_{i_1}\ldots Y_{i_{L-1}}|0\ra \right)^2 - 1,
\ee
which is negative by assumption. Since the $Y_{i}$'s contain only integer numbers, this implies that $\la 0 |Y_{i_1}\ldots Y_{i_{L-1}}|0\ra =0$, as we wanted to show. An analogous argument holds if there are multiple 7s.

As commented above, by multiplying all matrices $A_i$ by the least common multiple of all denominators involved in them, one can assume without loss of generality that they have integer coefficients instead of rational ones.
\end{proof}

\section{Canonical forms of Matrix Product States}
\label{sec:mps}

In this Section we will show a result about canonical forms of MPS. This will allows us to argue in \cref{sec:proof} that determining whether  there exists a purification that satisfies \eqref{eq:main-thm} is decidable. The following results are based on Ref.~\cite{Ci15}, but here we will give the explicit bounds. 
Since we will only consider MPDOs diagonal in the computational basis, we will consider them as MPS by mapping $|i\ra\la i|$ to $|i\ra$. That is, the pure state corresponding to
$$
\rho_A^L=\sum_{i_1\ldots i_L}\tr(A_{i_1}\cdots A_{i_L})|i_1\cdots i_L\ra \la i_1\cdots i_L|
$$ 
is
\be
\label{eq:psiAL}
|\psi_A^L\ra = \sum_{i_1\ldots i_N}\tr(A_{i_1}\cdots A_{i_L})|i_1\cdots i_L\ra .
\ee
We also denote by $|\psi_C^L\ra$  the pure state corresponding to the MPDO in purification form associated to $\{B_{i,e}\in \mathcal{M}_{\tilde{D}}\}$ (see Fig.~\ref{fig:states-fundamental-limitations}). More precisely,
\be
|\psi_C^L\ra = \sum_{i_1\ldots i_N}\tr(C_{i_1}\cdots C_{i_L})|i_1\cdots i_L\ra 
\label{eq:psiCL}
\ee
with  
\be
\label{eq:pure-purification}
(C_{i})_{(\beta,\gamma),(\beta',\gamma')} = 
\sum_{k} (B_{i,k})_{\beta,\beta'}  (\bar{B}_{i,k})_{\gamma,\gamma'}.
\ee
Note that the size of the matrices $C_i$ is $D'=\tilde{D}^2$.
 
 \begin{figure}[t]
\includegraphics[width=0.7\textwidth]{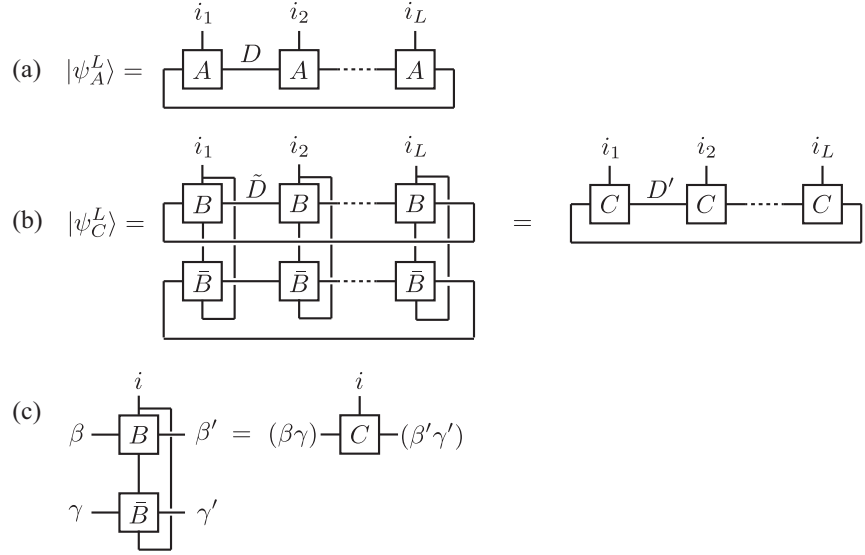}
\caption{The tensor network diagrams for  (a) $|\psi_{A}^{L} \ra$ (see Eq.~\eqref{eq:psiAL}),  
(b)  $|\psi_{C}^{L}\ra$ (see Eq.~\eqref{eq:psiCL}) and  (c) $C$ (see Eq.~\eqref{eq:pure-purification}). 
}  
\label{fig:states-fundamental-limitations} 
\end{figure}

The main result that we need concerning MPS is the following. We will state the results for general matrices $\{A_{i}\}$, $\{C_{i}\}$  but finally we will apply them to $C_{i}$ of the form \eqref{eq:pure-purification}.

\begin{thm}
\label{thm:mps}
Let  $\{A_i\in \mathcal{M}_{D}\}_{i=1}^d$, $\{C_i\in \mathcal{M}_{D'}\}_{i=1}^d$
be two families of matrices for which there exist  constants $m_L$ such that
\be
\ket{\psi_A^L} =m_L \ket{\psi_C^L} \quad \forall L\le D!D'! 3 (D+D')^6 .
\label{eq:hyp}
\ee
Then there exist $m_L$ such that 
$\ket{\psi_A^L} = m_L\ket{\psi_C^L}$ for all $L$ which are multiples of $D!D'!3(D+D')^5$.  
\end{thm}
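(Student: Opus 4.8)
The plan is to exploit the uniqueness of the canonical form of translationally invariant MPS together with an elementary linear--recurrence argument: the hypothesis \eqref{eq:hyp} forces $\ket{\psi_A^L}$ and $\ket{\psi_C^L}$ to be proportional for \emph{every} $L$, and restricting $L$ to the stated arithmetic progression then guarantees that neither state vanishes, so that the proportionality constants are genuinely well defined.

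\emph{Step 1 (canonical forms).} First I would bring $\{A_i\}$ and $\{C_i\}$ into canonical form in the sense of Ref.~\cite{Ci15}: up to a gauge transformation, $A_i\cong\bigoplus_\alpha\lambda_\alpha A_i^{(\alpha)}$ and $C_i\cong\bigoplus_\beta\kappa_\beta C_i^{(\beta)}$, where each block generates a \emph{normal} MPS --- one that becomes injective after blocking at most $\ell$ sites, with $\ell$ an explicit polynomial in the bond dimension of the block --- and carries a \emph{period} (the order of the cyclic group generated by the peripheral spectrum of its transfer operator) dividing its bond dimension. Since the $A$--blocks have total bond dimension $\le D$ and the $C$--blocks $\le D'$, the least common multiple $p$ of all these periods divides $\operatorname{lcm}(1,\dots,D)\cdot\operatorname{lcm}(1,\dots,D')$, hence divides $D!\,D'!$ --- this is the origin of the factorials. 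Making the bounds of Ref.~\cite{Ci15} explicit --- the injectivity length $\ell$ and the thresholds of Step 3 --- is the bulk of the work of this section.

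\emph{Step 2 (finite agreement $\Rightarrow$ proportionality for all $L$).} Consider the determinant of the $2\times2$ Gram matrix of $\ket{\psi_A^L}$ and $\ket{\psi_C^L}$,
\be
g(L):=\la\psi_A^L|\psi_A^L\ra\,\la\psi_C^L|\psi_C^L\ra-\bigl|\la\psi_A^L|\psi_C^L\ra\bigr|^2\ \ge\ 0,
\ee
which vanishes precisely when $\ket{\psi_A^L}$ and $\ket{\psi_C^L}$ are linearly dependent. Each overlap is the trace of an $L$-th power of a transfer operator: $\la\psi_A^L|\psi_A^L\ra=\tr(E_A^L)$ with $E_A=\sum_i A_i\otimes\bar A_i\in\mc M_{D^2}$; $\la\psi_C^L|\psi_C^L\ra=\tr(E_C^L)$ with $E_C\in\mc M_{(D')^2}$; and $\la\psi_A^L|\psi_C^L\ra=\tr(F^L)$ with $F=\sum_i C_i\otimes\bar A_i\in\mc M_{DD'}$ (so that $\bigl|\la\psi_A^L|\psi_C^L\ra\bigr|^2=\tr(F^L)\,\tr((\sum_i A_i\otimes\bar C_i)^L)$). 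Since $\tr(M^L)$ depends only on the eigenvalues of $M$ counted with algebraic multiplicity, $g(L)$ is a constant--coefficient exponential sum in at most $D^2(D')^2+(DD')^2=2(DD')^2$ complex bases; equivalently, the sequence $(g(L))_{L\ge1}$ satisfies a linear recurrence of order $\le 2(DD')^2$. By \eqref{eq:hyp}, $g(L)=0$ for all $L\le D!\,D'!\,3(D+D')^6$ --- far more than $2(DD')^2$ consecutive values --- so feeding these into the recurrence gives $g\equiv0$, i.e.\ $\ket{\psi_A^L}$ and $\ket{\psi_C^L}$ are linearly dependent for \emph{every} $L$. (Equivalently, after blocking $k:=D!\,D'!\,3(D+D')^5$ sites the hypothesis gives agreement of the blocked states at $D+D'$ consecutive sizes, and a Vandermonde argument on the $\le D+D'$ normal summands yields the same conclusion.)

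\emph{Step 3 (non--vanishing on the progression, and conclusion).} If $E_C$ is nilpotent then $\ket{\psi_C^L}=0$ for $L\ge(D')^2$, and \eqref{eq:hyp} then forces $E_A$ nilpotent too, so $\ket{\psi_A^L}=\ket{\psi_C^L}=0$ for all large $L$ and the claim is trivial. Otherwise $E_A,E_C$ are completely positive maps of positive spectral radius, and Perron--Frobenius theory --- together with the fact, from Step 1, that distinct normal blocks give linearly independent states, so that no cancellation between blocks occurs --- provides a threshold $L_*$, bounded by an explicit polynomial in $D+D'$ and in particular at most $3(D+D')^5$, such that $\ket{\psi_A^L}\ne0\ne\ket{\psi_C^L}$ whenever $L\ge L_*$ is a multiple of the period $p$ of Step 1. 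Put $k:=D!\,D'!\,3(D+D')^5$; then $p\mid D!D'!\mid k$ and $k\ge3(D+D')^5\ge L_*$, so every multiple $L$ of $k$ satisfies $p\mid L$ and $L\ge L_*$. Hence $\ket{\psi_A^L}$ and $\ket{\psi_C^L}$ are both nonzero, and by Step 2 proportional, so $\ket{\psi_A^L}=m_L\ket{\psi_C^L}$ for a unique (nonzero) $m_L$, which is the claim. The elementary ingredient here is Step 2; the main obstacle is the explicit bookkeeping in Steps 1 and 3 --- polynomial bounds for the injectivity length, the block periods, and the Perron--Frobenius threshold $L_*$, all fitting inside $D!D'!\,3(D+D')^6$ and $D!D'!\,3(D+D')^5$.
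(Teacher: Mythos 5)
Your Step 2 is correct and is a genuinely nicer route than the paper's to the statement ``$\ket{\psi_A^L}$ and $\ket{\psi_C^L}$ are linearly dependent for every $L$'': the Gram determinant $g(L)=\la\psi_A^L|\psi_A^L\ra\la\psi_C^L|\psi_C^L\ra-|\la\psi_A^L|\psi_C^L\ra|^2$ is indeed an exponential sum in at most $2(DD')^2$ bases, it satisfies the corresponding linear recurrence, and the hypothesis gives far more than $2(DD')^2$ consecutive zeros, so $g\equiv 0$. This bypasses the paper's block-matching machinery entirely. The problem is that linear dependence of two vectors does not give the conclusion: if $\ket{\psi_C^L}=0$ while $\ket{\psi_A^L}\neq 0$, the vectors are linearly dependent but no $m_L$ with $\ket{\psi_A^L}=m_L\ket{\psi_C^L}$ exists. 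Everything therefore hinges on your Step 3, and that step contains a genuine gap.

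The claim in Step 3 --- that linear independence of distinct normal blocks prevents cancellation, so that $\ket{\psi_C^L}\neq 0$ for all sufficiently large multiples of the l.c.m.\ $p$ of the block periods --- is false. \cref{thm:li} gives linear independence only for blocks \emph{not} related by $e^{\mi\phi}U\cdot U^\dagger$; two canonical-form blocks that \emph{are} so related (e.g.\ $C_i^2=e^{\mi\phi}UC_i^1U^\dagger$, both injective with period $1$ and positive weights $\lambda_1=\lambda_2=1$) produce proportional states, and $\ket{\psi_C^L}=(1+e^{\mi L\phi})\ket{\psi_{C^1}^L}$ vanishes on an arithmetic progression determined by $\phi$, which is an arbitrary phase and has nothing to do with the transfer-operator periods or with $p$. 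One can choose $\phi$ so that this vanishing progression meets the multiples of $D!D'!3(D+D')^5$. Hence your threshold $L_*$ does not exist as claimed, and you cannot exclude $\ket{\psi_C^L}=0\neq\ket{\psi_A^L}$ on the target progression. This is exactly the difficulty the paper's proof is built to handle: it merges equivalent blocks into $A_i^j\otimes\mathrm{diag}(\lambda_1^j,\dots,\lambda_{r_j}^j)$ with now \emph{complex} weights, proves that the normal blocks of $A$ and $C$ must coincide ($K=K'=J$), and then uses the moment lemmas (\cref{lem1}, \cref{cor:0s}, \cref{cor:prop}) to show that the coefficient sums $\sum_l(\lambda_l^j)^L$ and $\sum_m(\mu_m^j)^L$ remain proportional --- in particular vanish simultaneously --- for all $L$ in the progression. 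Your proposal has no substitute for this coefficient-matching step, so as written it does not establish the theorem; to repair it you would need to control the common zeros of the exponential sums $\sum_m(\mu_m^j)^L$ across blocks and relate them to those of the $\lambda$'s, which is essentially \cref{cor:prop}.
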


To prove \cref{thm:mps}, we need to gather some facts about MPS. We start with the canonical forms of MPS.

\begin{thm}[Canonical form (Theorem 4 in \cite{Pe07})]
\label{thm:cf}
Given $\{\tilde{A}_i\in \mathcal{M}_{\tilde{D}}\}_{i=1}^d$, there exists another set of matrices $\{A_i\in \mathcal{M}_{D}\}_{i=1}^d$ with $D\leq \tilde{D}$ so that $\ket{\psi_A^L}=\ket{\psi_{\tilde{A}}^L}$ for all $L$ and the $A_i$ are block diagonal,  $A_i=\oplus_j \lambda_j A^j_i$,  and satisfy that for all $j$
\begin{enumerate}[(i)]
\item $\lambda_j>0$, 
\item \label{i} $\1$ is the only fixed point of the operator $\E(X)= \sum_i A^j_i XA^{j\dagger}_i$,  and 
\item \label{ii} the unique fixed point of its adjoint map $\E^*(Y)=\sum_i A^{j\dagger}_i YA^{j}_i$ is a full rank positive diagonal matrix $\Lambda_j$ (normalized so that  $\tr(\Lambda_j)=1$).
\end{enumerate}
\end{thm}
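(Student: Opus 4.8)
The plan is to reach the canonical form by a sequence of gauge transformations, using that the trace representation is invariant under $\tilde A_i \mapsto M\tilde A_i M^{-1}$ for any invertible $M$, since $\tr(M\tilde A_{i_1}M^{-1}\cdots M\tilde A_{i_L}M^{-1}) = \tr(\tilde A_{i_1}\cdots\tilde A_{i_L})$ by cyclicity of the trace. The argument proceeds in two stages: first reduce the family to a direct sum of irreducible blocks, which can only lower the dimension and hence yields $D\le\tilde D$; then normalise each block separately so that (i)--(iii) hold.

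For the first stage, I would apply a single similarity transformation bringing all $\tilde A_i$ simultaneously to block-upper-triangular form, choosing a maximal flag of common invariant subspaces so that the diagonal blocks $\{A^j_i\}$ have no nontrivial common invariant subspace, i.e.\ act irreducibly. The point is that a product of block-upper-triangular matrices is again block-upper-triangular, with diagonal blocks equal to the products of the diagonal blocks; hence $\tr(\tilde A_{i_1}\cdots\tilde A_{i_L}) = \sum_j\tr(A^j_{i_1}\cdots A^j_{i_L})$ for every word, and the strictly-upper-triangular parts never enter the trace. Replacing the $\tilde A_i$ by the block-diagonal $\oplus_j A^j_i$ (and discarding any identically-zero block) therefore leaves $\ket{\psi^L}$ unchanged for all $L$ without increasing the dimension.

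For the second stage I treat one irreducible block at a time. The relevant object is the completely positive map $\E(X)=\sum_i A^j_i X A^{j\dagger}_i$, which is irreducible exactly because the $\{A^j_i\}$ share no common invariant subspace. Invoking the Perron--Frobenius theory for irreducible CP maps (Evans--H\o{}egh-Krohn), its spectral radius $r_j$ is a strictly positive, simple eigenvalue with a positive-definite eigenvector $\rho>0$ satisfying $\E(\rho)=r_j\rho$. Gauging and rescaling via $A^j_i\mapsto r_j^{-1/2}\rho^{-1/2}A^j_i\rho^{1/2}$ then makes the map unital, $\sum_i A^j_i\,\1\,A^{j\dagger}_i=\1$, so that $\1$ is the (by irreducibility unique) fixed point of $\E$, giving (ii); the scalar $\lambda_j:=\sqrt{r_j}>0$ extracted in the rescaling is pulled out front as in $A_i=\oplus_j\lambda_j A^j_i$, giving (i). For (iii), the now unital map has $\E^*$ with spectral radius $1$ and, by the same input, a unique positive-definite fixed point $\Lambda_j>0$; a final unitary gauge $A^j_i\mapsto U A^j_i U^\dagger$ diagonalising $\Lambda_j$ preserves unitality ($U\1 U^\dagger=\1$) and makes $\Lambda_j$ diagonal, after which I normalise $\tr(\Lambda_j)=1$.

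The main obstacle is the analytic heart of the second stage: the Perron--Frobenius/Evans--H\o{}egh-Krohn statement that an irreducible CP map attains its spectral radius as a simple eigenvalue with a strictly positive-definite eigenvector, and likewise for its adjoint. The remaining ingredients---gauge invariance, the triangularisation bookkeeping, and the closing unitary rotation---are elementary linear algebra. A secondary point that needs care is the equivalence between irreducibility of the family $\{A^j_i\}$ (no common invariant subspace) and irreducibility of the map $\E$, since it is this equivalence that both guarantees the positive-definite eigenvector and forces the fixed point $\1$ to be unique.
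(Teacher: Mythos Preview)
The paper does not prove this statement at all: it is quoted as Theorem~4 of \cite{Pe07} and used as a black box, so there is no in-paper proof to compare against. Your sketch is correct and is in fact the standard argument from that reference: simultaneously block-upper-triangularise along a maximal flag of common invariant subspaces so that only the irreducible diagonal blocks survive in every trace, then on each irreducible block apply Perron--Frobenius for irreducible CP maps (Evans--H{\o}egh-Krohn) to obtain a strictly positive spectral radius $r_j$ with a full-rank eigenvector, gauge and rescale to make the block unital (extracting $\lambda_j=\sqrt{r_j}>0$), and finish with a unitary diagonalising the dual fixed point $\Lambda_j$. The two caveats you flag---strict positivity of the Perron eigenvector and the equivalence between ``no common invariant subspace'' for $\{A^j_i\}$ and irreducibility of $\E$---are exactly the nontrivial inputs, and both are supplied by the Evans--H{\o}egh-Krohn theorem.
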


The blocks in the decomposition $A_i=\oplus_j \lambda_j A^j_i$ may not be injective yet. To make them injective, we only need to group $D!$ sites together, 
since the periodicity of each block is a  divisor of the size of the block (see Lemma 5 of \cite{Ca13b}, \cite{Fa92}). 

\begin{thm}[Quantum Wielandt \cite{Sa10}]\label{wielandt}
Consider the matrices $\{A_i\in \mathcal{M}_{D}\}_{i=1}^d$, so that the corresponding MPS becomes injective after blocking $L_0$ sites. Then $L_0$ can be taken to be $D^4-1$.
\end{thm}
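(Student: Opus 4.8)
The plan is to recast injectivity as a spanning condition and then bound it in two stages — an ``algebra-generation'' stage and an ``aperiodicity'' stage — mirroring the strongly-connected and aperiodic parts of the classical Wielandt inequality. Blocking $n$ sites makes the MPS injective precisely when
$$\mathcal{S}_n:=\operatorname{span}\{A_{i_1}\cdots A_{i_n}: i_1,\ldots,i_n\in\{1,\ldots,d\}\}$$
equals the full matrix algebra $\mathcal{M}_D$. The first thing I would record is a monotonicity observation: if $\mathcal{S}_n=\mathcal{M}_D$, then no nonzero $v$ can satisfy $v^\dagger A_i=0$ for all $i$ (otherwise $v^\dagger$ would annihilate every length-$n$ product, hence all of $\mathcal{M}_D$), so the generators have full joint row rank; combined with $\mathcal{S}_{n+1}=\operatorname{span}(\mathcal{S}_n\cdot\{A_i\})$ this forces $\mathcal{S}_{n+1}=\mathcal{M}_D$. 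Thus injectivity at one length implies it at all larger lengths, and the content of the theorem is to bound the \emph{smallest} injective blocking length by $D^4-1$.

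Stage one reaches the full algebra. Introduce the nested unital spaces $\mathcal{G}_n:=\operatorname{span}\big(\{\1\}\cup\{A_{i_1}\cdots A_{i_k}: 1\le k\le n\}\big)$, which satisfy $\mathcal{G}_n\subseteq\mathcal{G}_{n+1}$ and $\mathcal{G}_n\mathcal{G}_m\subseteq\mathcal{G}_{n+m}$. Their dimensions are nondecreasing, bounded by $D^2$, and start from $\dim\mathcal{G}_0=1$; moreover if a step fails to increase the dimension then, by nestedness and the product rule, the chain is constant thereafter. Hence it stabilizes at some unital subalgebra $\mathcal{A}=\mathcal{G}_{n_0}$ with $n_0\le D^2-1$. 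If the MPS is injective for some blocking $L_1$, then $\mathcal{M}_D=\mathcal{S}_{L_1}\subseteq\mathcal{A}$, so $\mathcal{A}=\mathcal{M}_D$: products of length at most $D^2-1$ already span $\mathcal{M}_D$. (A single generator would generate only a commutative algebra of dimension $\le D<D^2$, so injectivity forces $d\ge 2$.)

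Stage two — fixing the length — is the main obstacle. Spanning by products of length \emph{at most} $D^2-1$ must be upgraded to spanning by products of a single fixed length, which is exactly where aperiodicity enters. Since $\mathcal{S}_{L_1}=\mathcal{M}_D$ contains invertible elements, a suitable fixed-length word $V$ of some length $\ell$ is invertible, so right multiplication by $V$ embeds $\mathcal{S}_k$ isomorphically inside $\mathcal{S}_{k+\ell}$ and prevents the dimension from decreasing along the arithmetic progression of step $\ell$; combining progressions of coprime return-lengths (aperiodicity) synchronizes all the short words $k\le D^2-1$ to one common length at a cost governed by a Frobenius coin-problem estimate, which contributes a further factor $O(D^2)$. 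Multiplying the two stages bounds the minimal injective length by $(D^2-1)\cdot D^2=D^4-D^2\le D^4-1$, and the monotonicity observation then gives injectivity at exactly $D^4-1$. The delicate, genuinely Wielandt-flavoured point is making the length-synchronization uniform and extracting the correct quadratic-in-$D^2$ constant; heuristically the bound $D^4$ is just the classical Wielandt index $(D^2)^2$ transported to the vectorized bond space of dimension $D^2$, which is the cleanest way to anticipate the answer.
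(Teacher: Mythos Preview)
The paper does not prove this statement; it is quoted verbatim from \cite{Sa10} and used as a black box. So there is no proof in the paper to compare your attempt against.

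Evaluating your sketch on its own merits: the monotonicity observation and Stage~1 (the algebra-generation bound $n_0\le D^2-1$) are correct and standard. The genuine content of the quantum Wielandt inequality is your Stage~2, and there the argument does not close. Two concrete gaps:
\begin{enumerate}
\item You never bound the length $\ell$ of your invertible word. Stage~1 only places an invertible element inside $\mathcal{G}_{D^2-1}=\sum_{k\le D^2-1}\mathcal{S}_k$, not inside any single homogeneous piece $\mathcal{S}_\ell$; and the only $\ell$ for which you \emph{know} $\mathcal{S}_\ell$ contains an invertible element is $\ell=L_1$, the very quantity you are trying to bound. The ``coprime return-lengths'' you allude to are never exhibited or bounded.
\item Even granting invertibles $V_k\in\mathcal{S}_{n-k}$ for each $k\le D^2-1$, the inclusions $\mathcal{S}_kV_k\subseteq\mathcal{S}_n$ use \emph{different} right-multipliers for different $k$, so you cannot conclude $\mathcal{S}_n\supseteq\big(\sum_k\mathcal{S}_k\big)V=\mathcal{M}_D$. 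All you get is $\dim\mathcal{S}_n\ge\max_k\dim\mathcal{S}_k$, which is much weaker than $\dim\mathcal{S}_n=D^2$. The ``synchronization'' step is thus missing its main mechanism, and the final arithmetic $(D^2-1)\cdot D^2\le D^4-1$ is asserted rather than derived.
\end{enumerate}
The proof in \cite{Sa10} does not proceed via invertible words at all; the pivotal structural step is to produce a \emph{rank-one} matrix in some $\mathcal{S}_q$ with $q\le D^2$, and then sandwich it between short words on both sides to reach the full matrix algebra at a single fixed length. That rank-one step is what replaces your Stage~2, and it is where the real work (and the quartic exponent) comes from.
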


Now we show that if two injective MPS are proportional for some lengths, then their matrices must be related by a unitary.

\begin{thm}[Injective case]
Assume that $\{A_i\in \mathcal{M}_{D}\}$ and $\{C_i\in \mathcal{M}_{D'}\}$ 
are such that $|\psi_{A}^{L}\ra$ and $|\psi_{C}^{L}\ra$ become injective at some length $L_0=D^4-1$, 
and both verify \eqref{i} and \eqref{ii} of \cref{thm:cf}. 
Assume that $D'\le D$. Fix $L$ so that $L\ge 2L_0+1$  and assume that $\ket{\psi_A^L}$ is proportional to $\ket{\psi_{C}^L}$. Then $D=D'$ and there exists a unitary $U$ and a phase $e^{\mi\phi}$ so that $A_i=e^{\mi\phi} UC_i U^{\dagger}$ for all $i$.
\end{thm}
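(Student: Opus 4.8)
The plan is to run the standard ``uniqueness of the injective canonical form'' argument, keeping the length bounds explicit. First I would block $L_0 = D^4-1$ sites, so that \emph{both} MPS become injective after blocking: for $\{A_i\}$ this is assumed, and for $\{C_i\}$ it follows because $D'\le D$ forces $D'^4-1\le L_0$ and injectivity is preserved under further blocking. Writing $A_{\vec i}:=A_{i_1}\cdots A_{i_{L_0}}$ (and similarly $C_{\vec i}$), injectivity says the maps $\Phi_A:(\mathbb C^d)^{\otimes L_0}\to\mathcal M_D$, $\ket{\vec i}\mapsto A_{\vec i}$, and $\Phi_C$ are onto; fix right inverses $\Psi_A,\Psi_C$. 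Reading $\ket{\psi_A^L}=m\ket{\psi_C^L}$ (with $m\neq 0$) componentwise and splitting $L=L_0+r+L_0$, $r:=L-2L_0\ge1$, yields $\tr(A_{\vec a}A_{\vec b}A_{\vec c})=m\,\tr(C_{\vec a}C_{\vec b}C_{\vec c})$ for $|\vec a|=|\vec c|=L_0$, $|\vec b|=r$; contracting the two length-$L_0$ blocks against $\Psi_A$ gives
\[
\tr\!\big(X\,A_{\vec b}\,Y\big)=m\,\tr\!\big(N(X)\,C_{\vec b}\,N(Y)\big)\qquad\forall\,X,Y\in\mathcal M_D,\ \forall\,|\vec b|=r,
\]
with $N:=\Phi_C\circ\Psi_A$. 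A short computation shows $\ker\Phi_A=\ker\Phi_C$: if $\sum_{\vec i}c_{\vec i}A_{\vec i}=0$ then, pairing with the length-$(L-L_0)$ products (which span $\mathcal M_D$ and $\mathcal M_{D'}$ since $L-L_0>L_0$) and using $m\neq0$, also $\sum_{\vec i}c_{\vec i}C_{\vec i}=0$, and symmetrically. Hence $N$ is a well-defined linear bijection $\mathcal M_D\to\mathcal M_{D'}$, so in particular $D=D'$, with $N^{-1}=\Phi_A\circ\Psi_C$.

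Next I would show that $N$ is, up to a scalar, a unital algebra isomorphism. Taking the adjoint of the displayed identity in each slot with respect to the nondegenerate form $(X,Y)\mapsto\tr(XY)$ gives $XA_{\vec b}=m\,N^{\sharp}(N(X)C_{\vec b})$ and $A_{\vec b}Y=m\,N^{\sharp}(C_{\vec b}N(Y))$, where $N^{\sharp}$ is the (bijective) adjoint of $N$. Putting $X=Y=\openone$ and using injectivity of $N^{\sharp}$ together with the fact that the partial products $\{C_{\vec b}:|\vec b|=r\}$ \emph{generate} $\mathcal M_{D'}$ as an algebra (a consequence of primitivity of the transfer operator $\mathcal E_C$, which is what injectivity provides) forces $N(\openone)=\nu\openone$, $\nu\neq0$. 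Feeding this back, and combining with the symmetric relations obtained by exchanging $A\leftrightarrow C$ (hence $N\leftrightarrow N^{-1}$), one promotes the single-$\vec b$ relations to $N^{\sharp}(M\,N(Y))=\nu\,N^{\sharp}(M)\,Y$ for \emph{all} $M\in\mathcal M_{D'}$, which forces $N(XY)=\nu\,N(X)N(Y)$ for all $X,Y$. Thus $\nu^{-1}N$ is a unital isomorphism $\mathcal M_D\to\mathcal M_{D'}$, so by the Skolem--Noether theorem $N(X)=\nu\,SXS^{-1}$ for some invertible $S$; unwinding $A_{\vec b}=m\nu\,N^{\sharp}(C_{\vec b})$ gives $A_{\vec b}=\lambda\,T C_{\vec b}T^{-1}$ with $T:=S^{-1}$, $\lambda:=m\nu^{2}$. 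Letting $\vec b$ range over all length-$r$ words and using injectivity of the $r$-blocked MPS, this descends to single sites: $A_i=\lambda'\,TC_iT^{-1}$ with $(\lambda')^{r}=\lambda$ (for $r=1$ it is immediate).

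Finally I would fix the remaining gauge using the canonical-form conditions. Substituting $A_i=\lambda'\,TC_iT^{-1}$ into condition (i), $\sum_iA_iA_i^{\dagger}=\openone$, turns it into $\mathcal E_C((T^{\dagger}T)^{-1})=|\lambda'|^{-2}(T^{\dagger}T)^{-1}$: the strictly positive $(T^{\dagger}T)^{-1}$ is an eigenvector of $\mathcal E_C$. Since $\mathcal E_C$ is primitive with fixed point $\openone$, Perron--Frobenius forces the eigenvalue to be $1$ and the eigenvector to be a multiple of $\openone$; hence $|\lambda'|=1$, i.e.\ $\lambda'=e^{\mi\phi}$, and $T^{\dagger}T\propto\openone$, so $T=cU$ with $U$ unitary and $c$ a nonzero scalar. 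As $c$ cancels in $TC_iT^{-1}=UC_iU^{\dagger}$, we obtain $A_i=e^{\mi\phi}\,UC_iU^{\dagger}$ for all $i$, and then condition (ii) gives $\Lambda_A=U\Lambda_CU^{\dagger}$.

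The step I expect to be the main obstacle is the middle one: upgrading the ``local'' trace identity to honest multiplicativity of $N$. Since the partial products $A_{\vec b},C_{\vec b}$ need not be invertible, and (when $L<3L_0$) the span of $\{C_{\vec b}:|\vec b|=r\}$ need not be all of $\mathcal M_{D'}$, one cannot simply insert arbitrary matrices into the middle slot, but must exploit that these partial products generate the full matrix algebra and carefully propagate the relation across products; this, and the attendant length bookkeeping, is where injectivity (hence the Quantum Wielandt bound) does the real work. The only other substantive input is the last step, where conditions (i)--(ii) are exactly what excludes a non-unitary $T$ and a scalar of modulus $\neq1$; the blocking, the kernel computation and Skolem--Noether are routine.
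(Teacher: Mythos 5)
Your route is genuinely different from the paper's (which transfers the problem to the parent Hamiltonian and then cites Lemma 3 of \cite{Ca13b}), and parts of it are fine: the blocking, the kernel computation giving a well-defined linear bijection $N=\Phi_C\circ\Psi_A$ and hence $D=D'$, the deduction $N(\openone)=\nu\openone$ (commuting with a generating set does imply centrality), and the final gauge-fixing via conditions (i)--(ii) and Perron--Frobenius are all correct. But the step you yourself flag as the main obstacle is a genuine gap, not just a technical chore. The relation you need to promote, $N^{\sharp}\bigl(M\,N(Y)\bigr)=\nu\,N^{\sharp}(M)\,Y$, is \emph{linear} in $M$ and is only established for $M$ in the linear span of $\{C_{\vec b}:|\vec b|=r\}$ with $r=L-2L_0$ (for $L=2L_0+1$ this is just $\mathrm{span}\{C_i\}$, typically a tiny subspace of $\mathcal M_{D'}$). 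The fact that these elements \emph{generate} $\mathcal M_{D'}$ as an algebra is of no use for extending a linear identity: to get the identity for a product $C_{\vec b}C_{\vec b'}$ you would need the trace identity at total length $L+r$, i.e.\ the proportionality hypothesis at a length other than the single fixed $L$ you are given. The same obstruction blocks the natural alternative of checking multiplicativity on the spanning set $N(A_{\vec a}A_{\vec a'})=C_{\vec a}C_{\vec a'}$: the characterization of $N(Z)$ tests against words of length exactly $L-L_0$, so evaluating it on $Z=A_{\vec a}A_{\vec a'}$ requires the hypothesis at length $L+L_0$. The symmetric relations with $A\leftrightarrow C$ reproduce the same identities and add nothing. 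So as written the argument does not close, and I do not see how to close it by purely algebraic manipulation of the single-length data.

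The paper supplies exactly the missing ingredient before any algebra is done: let $h$ be the parent Hamiltonian of $\ket{\psi_A}$, of interaction range $L_0+1$, whose unique ground state on any chain of length $\ge 2L_0+1$ is $\ket{\psi_A^N}$ (Theorem 10 of \cite{Pe07} plus the Wielandt bound). Proportionality at the \emph{one} length $L$ forces each local term of $h$ to annihilate $\ket{\psi_C^L}$; but whether a local term annihilates $\ket{\psi_C^N}$ depends only on the local support spaces $\{\sum_{\vec i}\tr(XC_{\vec i})\ket{\vec i}:X\}$ on $L_0+1$ consecutive sites, which by injectivity are independent of $N$. Hence $\ket{\psi_C^N}\propto\ket{\psi_A^N}$ for \emph{all} $N\ge 2L_0+1$, and only then is the algebraic ``fundamental theorem'' step (Lemma 3 of \cite{Ca13b}, essentially your Skolem--Noether argument, which does go through once relations at all large lengths are available) invoked. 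If you want to keep your self-contained algebraic framework, you should insert this propagation-to-all-lengths step (or an equivalent use of the length-independence of local supports) before attempting multiplicativity of $N$.
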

\begin{proof}
Let $h$ denote the interaction hamiltonian of range $L_0+1$ that has $\ket{\psi_A^L}$ as its unique ground state (up to constant) in any chain of length $N\ge 2L_0+1$ (Theorem 10 of \cite{Pe07} and \cref{wielandt} for the injectivity bound of $\{A_i\}$). Using injectivity of $\{C_i\}$ and the hypothesis it is easy to see that $\ket{\psi_{C}^L}$ is also a ground state for $h$  for all chains of length $L\ge 2L_0+1$. Therefore $\ket{\psi_{C}^L}$ and $\ket{\psi_{A}^L}$ are proportional for all $L\ge 2L_0+1$.
Finally we use Lemma 3 of \cite{Ca13b} to conclude the proof.
\end{proof}

Now we state that each of the injective blocks is linearly independent. 

\begin{thm}[Linear independence of injective blocks (Proposition 4 of \cite{Pe07})]
\label{thm:li}
Let $\{A^1_i\},\ldots, \{A^b_i\}$ be of sizes  $D_1\ge \cdots \ge D_b$, 
so that all of them define injective MPS, 
each of them is in canonical form, 
and assume that neither of them is related by $e^{\mi\phi} U \cdot U^\dagger$. Then for all $L\ge 3(b-1)D_1^4$ the vectors $\ket{\psi_{A^1}^L},\ldots, \ket{\psi_{A^b}^L}$ are linearly independent.
\end{thm}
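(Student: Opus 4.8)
The plan is to reduce \cref{thm:li} to showing that the Gram matrix $G(L)$ with entries $G(L)_{jk}=\la\psi_{A^j}^L|\psi_{A^k}^L\ra$ is non-singular for every $L\ge 3(b-1)D_1^4$; this is equivalent to the asserted linear independence. Introducing the mixed transfer operators $E_{jk}=\sum_i\bar A_i^j\otimes A_i^k$ on $\C^{D_j}\otimes\C^{D_k}$, one has $G(L)_{jk}=\tr(E_{jk}^L)$, so the whole problem is controlled by the spectra of the $E_{jk}$. The statement itself is Proposition~4 of \cite{Pe07}; the task here is to run that argument while tracking the constants.

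Three ingredients enter. First, by \cref{thm:cf} the map $\E_j(X)=\sum_i A_i^j\,X\,A_i^{j\dagger}$ is unital and its adjoint $\E_j^*$ is a primitive quantum channel, with primitivity index at most $D_j^4-1\le D_1^4-1$ by \cref{wielandt}; hence $E_{jj}$ has $1$ as an algebraically simple eigenvalue, every other eigenvalue of modulus strictly below $1$, and $E_{jj}^L$ converges to the rank-one projector onto its fixed-point line, so that $G(L)_{jj}\to 1$. Second, for $j\ne k$ the spectral radius of $E_{jk}$ is $\le 1$ (Cauchy--Schwarz together with $G(L)_{jj},G(L)_{kk}\to 1$) and in fact $<1$: a peripheral eigenvector of $E_{jk}$ is an off-diagonal peripheral eigenvector of the unital completely positive map with Kraus operators $A_i^j\oplus A_i^k$, and by the structure theory of peripheral spectra --- Lemma~3 of \cite{Ca13b}, which already underlies the Injective case theorem above --- this would force $\{A_i^j\}$ and $\{A_i^k\}$ to be related by $e^{\mi\phi}U\cdot U^\dagger$, contrary to hypothesis; hence $G(L)_{jk}\to 0$. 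Third, the two-block case $b=2$ is exactly the Injective case theorem combined with \cref{wielandt}: were $\ket{\psi_{A^1}^L}$ and $\ket{\psi_{A^2}^L}$ proportional for some $L\ge 2(D_1^4-1)+1$, they would be unitarily related, a contradiction, so they are linearly independent for all $L\ge 3D_1^4$.

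To upgrade the limits $G(L)\to\1$ into the threshold $3(b-1)D_1^4$ --- and in particular to obtain the linear dependence on $b$ --- I would run a peeling induction on the number of blocks, following \cite{Pe07}. Assume $\ket{\psi_{A^1}^L},\dots,\ket{\psi_{A^{b-1}}^L}$ are linearly independent for $L\ge 3(b-2)D_1^4$ and suppose, for contradiction, that $\det G(L)=0$ for some $L\ge 3(b-1)D_1^4$. The inductive hypothesis makes the leading $(b-1)\times(b-1)$ minor of $G(L)$ non-zero, so the Schur complement $s_L=\dist\!\big(\ket{\psi_{A^b}^L},\mathrm{span}\{\ket{\psi_{A^j}^L}\}_{j<b}\big)^2$ must vanish, i.e.\ $\ket{\psi_{A^b}^L}=\sum_{j<b}\alpha_j\ket{\psi_{A^j}^L}$ for some scalars $\alpha_j$. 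One then argues, as in \cite{Pe07}, that such a relation at length $L\ge 3(b-1)D_1^4$ forces --- via the uniqueness of canonical forms (\cref{thm:cf}) and the injective-blocking bound \cref{wielandt} applied to a left and a right region of the chain --- the block $\{A_i^b\}$ to coincide, up to $e^{\mi\phi}U\cdot U^\dagger$, with one of the $\{A_i^j\}$, contradicting the hypothesis; the two injective regions and the bulk between them are exactly what yield one extra factor $3D_1^4$ per block. The base cases $b=1$ (injectivity gives $\ket{\psi_{A^1}^L}\ne 0$ for $L\ge D_1^4-1$) and $b=2$ (the Injective case theorem) start the induction, and for $b\ge 2$ the stated threshold already exceeds $D_1^4-1$.

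The hard part is this last, quantitative step. Since the off-diagonal spectral radii are strictly below $1$ but not uniformly so --- two injective canonical blocks can be arbitrarily close to being unitarily related --- no crude ``$G(L)$ is near $\1$'' estimate can produce a bound depending only on $b$ and $D_1$, and one genuinely needs the structural peeling induction; all the care lies in checking that the comparison on the bulk, together with the two injective caps, closes with precisely the constants $3$ and $D_1^4$.
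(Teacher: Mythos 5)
The paper itself gives no proof of this statement: it is imported verbatim as Proposition~4 of \cite{Pe07} (with the explicit constant $3(b-1)D_1^4$ attached), so there is no in-paper argument to compare against. Judged on its own terms, your proposal is not yet a proof; it has a genuine gap exactly where the content of the theorem lies. The entire quantitative claim --- that linear independence holds for \emph{every} $L\ge 3(b-1)D_1^4$, with a threshold depending only on $b$ and $D_1$ --- rests on your peeling step, and that step is asserted rather than proved: you write that a relation $\ket{\psi_{A^b}^L}=\sum_{j<b}\alpha_j\ket{\psi_{A^j}^L}$ at a single length $L\ge 3(b-1)D_1^4$ ``forces, via the uniqueness of canonical forms and the injective-blocking bound, the block $\{A_i^b\}$ to coincide up to $e^{\mi\phi}U\cdot U^\dagger$ with one of the $\{A_i^j\}$,'' and then concede in your closing paragraph that all the care lies in checking that this closes with the constants $3$ and $D_1^4$. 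Converting a one-shot linear dependence among $b$ vectors into a unitary relation between two of the underlying tensors is precisely the hard part: it requires using injectivity on two disjoint regions of length $\ge D_1^4-1$ to access the virtual indices, comparing the resulting boundary decompositions on the remaining bulk, and isolating a single pair of blocks --- this is where the factor $3$ and the exponent $4$ (via \cref{wielandt}) actually come from, and none of it is carried out. Deferring it to ``as in \cite{Pe07}'' makes the argument circular with respect to the statement being proved.

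A secondary remark: your first two ingredients (the Gram matrix $G(L)$, the mixed transfer operators $E_{jk}$, and the limits $G(L)_{jj}\to 1$, $G(L)_{jk}\to 0$ via Lemma~3 of \cite{Ca13b}) are correct but, as you yourself observe, cannot yield the theorem --- the off-diagonal spectral gaps are not uniform in the tensors, so no bound depending only on $b$ and $D_1$ can be extracted from them. They establish linear independence only for $L$ beyond some tensor-dependent threshold, which is a strictly weaker statement. So the proposal consists of correct machinery that is then discarded, plus a sketch of the correct strategy (the one in \cite{Pe07} and \cite{Ci15}) whose decisive step is left unproved. To complete it you would need to actually run the intersection/parent-Hamiltonian argument on the three regions of the ring (two injective caps of length $D_1^4-1$ and the bulk between them) and verify that $3(b-1)D_1^4$ suffices; the base cases $b=1$ and $b=2$ you give are fine.
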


Finally we need to collect some facts about moments of sets of numbers. 

\begin{lemma}\label{lem1}
Let $\alpha_1,\ldots,\alpha_r,\beta_1,\ldots,\beta_n\in \C$ with $n\ge r$ so that 
\be
\sum_{i=1}^r\alpha_i^L=\sum_{j=1}^n \beta_j^L \; , \quad \text{ for all } L=1,\ldots, n.
\ee
Complete them as $\alpha_{r+1}=\cdots =\alpha_n=0$. Then there exists a permutation $P\in \Pi_n$ so that 
\be
(\alpha_1,\ldots, \alpha_n)=(\beta_{P(1)},\ldots, \beta_{P(n)}).
\ee
\end{lemma}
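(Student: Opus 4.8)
The plan is to reduce the statement to the classical fact that a sequence of complex numbers is determined up to permutation by its power sums of orders $1$ through $n$, via the Newton–Girard identities. Concretely, after completing with zeros we have two multisets $\{\alpha_1,\dots,\alpha_n\}$ and $\{\beta_1,\dots,\beta_n\}$ in $\C$, and the hypothesis says their power sums $p_L := \sum_i \alpha_i^L$ and $q_L := \sum_j \beta_j^L$ agree for $L = 1,\dots,n$. The Newton–Girard recursion expresses the elementary symmetric polynomials $e_1,\dots,e_n$ of a set of $n$ numbers as universal polynomials in the power sums $p_1,\dots,p_n$ (over $\mathbb{Q}$, so no characteristic issues arise). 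Hence $e_k(\alpha) = e_k(\beta)$ for all $k = 1,\dots,n$, which means the two monic degree-$n$ polynomials $\prod_{i}(x-\alpha_i)$ and $\prod_{j}(x-\beta_j)$ have identical coefficients and are therefore equal. By unique factorization in $\C[x]$ their root multisets coincide, so there is a permutation $P\in\Pi_n$ with $(\alpha_1,\dots,\alpha_n) = (\beta_{P(1)},\dots,\beta_{P(n)})$.

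The steps, in order, would be: (i) fix $n$ and pass to the padded sequences, noting that padding with zeros does not change any power sum; (ii) invoke Newton–Girard to solve recursively for $e_1 = p_1$, then $e_2 = \tfrac12(e_1 p_1 - p_2)$, and in general $k\,e_k = \sum_{j=1}^{k}(-1)^{j-1} e_{k-j}\,p_j$, which determines $e_1,\dots,e_n$ from $p_1,\dots,p_n$ alone; (iii) conclude coefficient-wise equality of the two characteristic polynomials; (iv) apply the fundamental theorem of algebra / unique factorization to extract the permutation. The mild technical point is that the recursion divides by $k$, so one should note we are working over $\C$ (or $\mathbb{Q}$), where this is harmless; the hypothesis $n \ge r$ is used only to guarantee that after padding both sides genuinely have $n$ entries, so that the $n$ power-sum equations suffice to pin down all $n$ elementary symmetric functions.

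I do not expect a serious obstacle here; the only thing to be careful about is bookkeeping — making sure one really has exactly $n$ equations for $n$ unknowns $e_1,\dots,e_n$ after the padding, and that the zero-padding is legitimate (it is, since $0^L = 0$ for $L \ge 1$). If one prefers an even more self-contained route avoiding symmetric-function machinery, an alternative is induction on $n$: the generating-function identity $\sum_{L\ge 1} p_L t^L = \sum_i \frac{\alpha_i t}{1-\alpha_i t}$ shows that agreement of $p_1,\dots,p_n$ forces the two rational functions $\sum_i \frac{\alpha_i t}{1-\alpha_i t}$ and $\sum_j \frac{\beta_j t}{1-\beta_j t}$ to agree to order $n$ in $t$; since each has at most $n$ poles, comparing residues at a common pole of largest multiplicity lets one cancel one matched pair $\alpha_i = \beta_j$ and recurse. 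Either way the lemma follows, and this is the form in which it will be fed into the proof of \cref{thm:mps}.
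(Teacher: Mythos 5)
Your argument is correct and is essentially identical to the paper's own proof: both use the Newton identities to recover the elementary symmetric polynomials from the first $n$ power sums, conclude that the two monic degree-$n$ polynomials $\prod_i(X-\alpha_i)$ and $\prod_j(X-\beta_j)$ coincide, and then read off the permutation from the equality of root multisets. No further changes are needed.
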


This result is standard but we include the proof for completeness.

\begin{proof}
Let  $s_k$ denote the power sum polynomials in $n$ variables, $s_k=x_1^k+\cdots +x_n^k$, and 
\be
\tau_k=\sum_{i_1<i_2<\cdots<i_k}x_{i_1}x_{i_2}\cdots x_{i_k}.
\ee
We have the following Newton identities:
\bea
\tau_1&=&s_1, \\
\tau_k&=&(-1)^{k-1}\frac{1}{k}\left(s_k-\tau_1s_{k-1}+\cdots + (-1)^{k-1}\tau_{k-1}s_1\right), \nonumber
\eea
which allow us to write all $\tau_1,\ldots, \tau_n$ as  polynomials on $s_1,\ldots, s_n$.

Now we consider a new variable $X$ and 
\be
f(X)=(X-x_1)(X-x_2)\cdots (X-x_n).
\ee 
By Viet\'a's formula, we have
\be
f= X^n-\tau_1X^{n-1}+\tau_2X^{n-2}+\cdots +(-1)^n\tau_n.
\ee
We define now $f_\alpha$ as $f$ with $x_i=\alpha_i$ and similarly $f_\beta$. By the above and the hypothesis, both are polynomials in $X$ with the same coefficients (hence the same polynomial) but one has roots $\alpha_1,\ldots,\alpha_n$ (with the repetitions given by the multiplicities) and the other $\beta_1,\ldots,\beta_n$. The conclusion follows. 
\end{proof}

A trivial corollary of this lemma is the following. 

\begin{cor}\label{cor:0s}
Let $\alpha_1,\ldots \alpha_r\in \C\backslash\{0\}$. Then there exists $L\le r$ so that $\sum_{i=1}^r \alpha_i^L\not = 0$.
\end{cor}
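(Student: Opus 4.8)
The plan is to deduce this immediately from \cref{lem1} by a contradiction argument. First I would suppose, for contradiction, that $\sum_{i=1}^r\alpha_i^L=0$ for \emph{every} $L\in\{1,\ldots,r\}$. The goal is then to apply \cref{lem1} in the degenerate situation where the second family of numbers is identically zero.

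Concretely, I would set $n=r$ and take $\beta_1=\cdots=\beta_r=0$. The hypothesis of \cref{lem1}, namely $\sum_{i=1}^r\alpha_i^L=\sum_{j=1}^n\beta_j^L$ for all $L=1,\ldots,n$, is satisfied because under our assumption the left-hand side vanishes for $L=1,\ldots,r$ and the right-hand side is zero identically. Since here $n=r$, no padding of the $\alpha_i$ with extra zeros is needed. \cref{lem1} then produces a permutation $P\in\Pi_r$ with $(\alpha_1,\ldots,\alpha_r)=(\beta_{P(1)},\ldots,\beta_{P(r)})=(0,\ldots,0)$, i.e.\ every $\alpha_i=0$. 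This contradicts the hypothesis $\alpha_i\in\C\backslash\{0\}$, so there must exist some $L\le r$ with $\sum_{i=1}^r\alpha_i^L\neq 0$.

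There is essentially no obstacle here; the only point requiring a moment's care is that \cref{lem1} controls exactly the power sums of orders $L=1,\ldots,n$, and with $n=r$ this matches the range $L\le r$ claimed in the corollary, so the reduction is clean. For completeness one could also give a direct proof: grouping the $\alpha_i$ into distinct values $\gamma_1,\ldots,\gamma_s$ ($s\le r$) with multiplicities $m_k$, the vanishing of the first $s$ power sums reads $\sum_{k=1}^s m_k\gamma_k^L=0$ for $L=1,\ldots,s$, and the matrix $(\gamma_k^{L})_{L,k=1}^{s}$ factors as a diagonal matrix with nonzero entries $\gamma_k$ times a Vandermonde matrix in the distinct nonzero nodes $\gamma_k$, hence is invertible, forcing all $m_k=0$ — again a contradiction. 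I would, however, keep the proof in the short form based on \cref{lem1}, as the paper already advertises it as a trivial corollary.
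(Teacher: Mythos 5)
Your proposal is correct and is exactly the deduction the paper intends when it calls this a ``trivial corollary'' of \cref{lem1}: apply the lemma with $n=r$ and $\beta_1=\cdots=\beta_r=0$ to force all $\alpha_i=0$, contradicting the hypothesis. The optional Vandermonde argument is also fine but unnecessary.
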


Another corollary for the proportional case is the following.

\begin{cor}
\label{cor:prop}
For each $\alpha \in \{1,\ldots, \Gamma\}$, consider the vectors $(\mu_{\alpha,1},\ldots, \mu_{\alpha, r_{\alpha}})$ and $(\nu_{\alpha,1},\ldots, \nu_{\alpha, r_{\alpha}})$ (if they do not have the same length, we complete them with zeros), and let $L_0=\sum_\alpha r_\alpha$. Assume that  for each $L\le L_0^2$ there exists a constant $m_L\neq 0$ independent of $\alpha$ such that
\begin{equation}\label{eq:lem4}
\sum_{i=1}^{r_\alpha} \mu_{\alpha, i}^L=m_L \sum_{i=1}^{r_\alpha} \nu_{\alpha, i}^L \quad \forall \alpha .
\end{equation}
Then for all $L\in \N$ there exists a $m_L$ so that (\ref{eq:lem4}) holds.
\end{cor}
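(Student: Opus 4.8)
\textbf{Proof proposal for Corollary~\ref{cor:prop}.}

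The plan is to reduce this to Lemma~\ref{lem1} by treating the index $\alpha$ as just a bookkeeping device that merges several moment problems into one. First I would stack everything: set $r=\sum_\alpha r_\alpha=L_0$ and consider the single multiset $\{\mu_{\alpha,i}\}_{\alpha,i}$ together with the single multiset $\{m_L$-rescaled copies of$\}$ $\{\nu_{\alpha,i}\}_{\alpha,i}$. The obstacle is that the constant $m_L$ depends on $L$, so the hypothesis \eqref{eq:lem4} is \emph{not} literally a statement that two multisets have equal power sums; it says the $\mu$-side power sum equals $m_L$ times the $\nu$-side power sum, for each fixed $L$, with the \emph{same} $m_L$ across all $\alpha$. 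So the first real step is to pin down $m_L$ for small $L$ from the data, and then bootstrap.

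Here is how I would organize the bootstrap. For a fixed $\alpha$, consider the two polynomials $f_{\mu_\alpha}(X)=\prod_i(X-\mu_{\alpha,i})$ and $f_{\nu_\alpha}(X)=\prod_i(X-\nu_{\alpha,i})$ of degree $r_\alpha$. The hypothesis for $L=1,\ldots,r_\alpha$ (which is $\le L_0\le L_0^2$, so certainly available) together with the Newton identities expresses the elementary symmetric functions $\tau_k(\mu_\alpha)$ and $\tau_k(\nu_\alpha)$ as universal polynomials in the power sums $s_1,\ldots,s_k$; since $s_L(\mu_\alpha)=m_L\,s_L(\nu_\alpha)$ for these $L$, each $\tau_k(\mu_\alpha)$ is a fixed polynomial expression in $m_1,\ldots,m_{r_\alpha}$ and in $s_L(\nu_\alpha)$. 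The point I want to extract is that the \emph{whole} tuple $(\mu_{\alpha,i})_i$ is determined (as an unordered tuple) once we know the $r_\alpha$ numbers $m_1,\ldots,m_{r_\alpha}$ and the tuple $(\nu_{\alpha,i})_i$ — because the $\tau_k$'s determine $f_{\mu_\alpha}$ and hence its root multiset. In particular, if it so happens that $m_1=\cdots=m_{r_\alpha}=1$ for one $\alpha$ then $f_{\mu_\alpha}=f_{\nu_\alpha}$ and the two tuples coincide up to permutation, giving \eqref{eq:lem4} for \emph{all} $L$ on that $\alpha$. In general $m_L$ need not be $1$, so instead I would argue as follows: since $L_0=\sum_\alpha r_\alpha$, for each $L\le L_0$ at least one value of $\alpha$ has $r_\alpha\ge$ something useful — more carefully, I would feed all the $\alpha$'s simultaneously into Lemma~\ref{lem1} by forming the concatenated $r=L_0$-tuples $A=(\mu_{\alpha,i})_{\alpha,i}$ and, on the other side, $B=(\nu_{\alpha,i})_{\alpha,i}$, and noting that $s_L(A)=\sum_\alpha s_L(\mu_\alpha)=m_L\sum_\alpha s_L(\nu_\alpha)=m_L\,s_L(B)$ for $L\le L_0^2$ (here using that $m_L$ is independent of $\alpha$). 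This still has the stray $m_L$, so the genuine trick is needed at exactly this point.

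The trick I expect to use to kill $m_L$: introduce, for a free parameter $t$, the enlarged tuples $A_t=(\mu_{\alpha,i})\cup\{t\}$ repeated, versus $B_t$ similarly, or more cleanly, observe that $m_L$ is forced to be a \emph{ratio of power sums that is itself eventually consistent}: since $s_L(\mu_\alpha)/s_L(\nu_\alpha)$ is independent of $\alpha$ and equals $m_L$ for all $L\le L_0^2$, and since (by the determinacy argument of the previous paragraph applied with the now-known $m_1,\ldots,m_{L_0}$ — note $L_0\ge r_\alpha$ for every $\alpha$) each root multiset $(\mu_{\alpha,i})_i$ is a fixed function of $(\nu_{\alpha,i})_i$ and of $m_1,\ldots,m_{L_0}$, the sequence $L\mapsto s_L(\mu_\alpha)$ for $L>L_0$ is then \emph{forced}; one checks it equals $m_L s_L(\nu_\alpha)$ with $m_L$ defined by the same universal formula, and that this extends the relation to all $L$. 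Concretely: apply Lemma~\ref{lem1} not to $A$ versus $B$ but to $A$ versus the multiset $B'$ obtained by replacing each $\nu_{\alpha,i}$ appropriately so that $s_L(B')=s_L(A)$ for $L\le L_0$; but in fact the cleanest route, and the one I would write up, is to run the Newton-identity / Vi\`eta argument of Lemma~\ref{lem1} \emph{separately for each $\alpha$} using the shared $m_1,\dots,m_{r_\alpha}$ to build modified polynomials $g_\alpha(X)=X^{r_\alpha}-m_1\tau_1(\nu_\alpha)X^{r_\alpha-1}+\cdots$ and show $f_{\mu_\alpha}=g_\alpha$; then since the coefficients of $g_\alpha$, hence all higher power sums $s_L(\mu_\alpha)$, are rational functions of $m_1,\dots,m_{r_\alpha}$ and $s_1(\nu_\alpha),\dots,s_{r_\alpha}(\nu_\alpha)$ alone, and since the \emph{same} rational functions govern the $\nu$-side after multiplying through, the identity $s_L(\mu_\alpha)=m_L s_L(\nu_\alpha)$ with a single $\alpha$-independent $m_L$ propagates to all $L\in\N$ by induction on $L$ using the Newton recursion. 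The quadratic budget $L_0^2$ is exactly what is needed so that after grouping/blocking we still have enough equations: we need $L$ up to $r_\alpha$ per block to fix that block, and the factor guaranteeing $\alpha$-independence of $m_L$ for $L$ up to $L_0$ then to push past, and $L_0\cdot L_0$ comfortably covers both. The main obstacle, and the only subtle point, is precisely establishing that the single ambiguous sequence $(m_L)$ is \emph{consistent} — i.e.\ that the per-$\alpha$ forced continuations never disagree — and that is where the hypothesis "$m_L$ independent of $\alpha$, for enough $L$" does the work.
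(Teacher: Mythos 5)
There is a genuine gap, and it sits exactly where you flag ``the only subtle point'': the propagation of a consistent, $\alpha$-independent $m_L$ beyond the range where the hypothesis is used. Your Newton-identity step does show that the multiset $\{\mu_{\alpha,i}\}_i$ is determined by $\{\nu_{\alpha,i}\}_i$ together with $m_1,\dots,m_{r_\alpha}$ --- but that step, and the consistency you invoke, only consume the hypothesis for $L\le L_0$, and from that much information the conclusion is simply false. Concretely, take $\Gamma=2$, $r_1=r_2=2$, $\zeta=e^{2\pi \mi/5}$, and $\mu_1=\nu_1=(1,-1)$, $\mu_2=(1,\zeta)$, $\nu_2=(1,-\zeta)$, so $L_0=4$. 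Then \eqref{eq:lem4} holds with nonzero $\alpha$-independent $m_L$ for $L=1,2,3,4$ (namely $m_2=m_4=1$, and $m_1=(1+\zeta)/(1-\zeta)$, $m_3=(1+\zeta^3)/(1-\zeta^3)$, which are compatible with $\alpha=1$ since both sides vanish there for odd $L$), yet at $L=5$ one has $\sum_i\mu_{2,i}^5=2$ while $\sum_i\nu_{2,i}^5=0$, so no $m_5$ exists. Hence any argument that, like yours, extracts its content from the first $L_0$ moments per block cannot close the proof; the sentence ``one checks it equals $m_L s_L(\nu_\alpha)$ with $m_L$ defined by the same universal formula'' is precisely the unproved --- and at that level of information unprovable --- step. (A smaller issue: $\tau_2(\mu_\alpha)=\tfrac12\left(m_1^2 s_1(\nu_\alpha)^2-m_2 s_2(\nu_\alpha)\right)$ is not a multiple of $\tau_2(\nu_\alpha)$ in general, so your $g_\alpha$ is not obtained from $f_{\nu_\alpha}$ by rescaling coefficients in the way written.)

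The missing idea --- and the reason the hypothesis is demanded up to $L_0^2$ rather than $L_0$ --- is to eliminate $m_L$ by cross-multiplying two instances of \eqref{eq:lem4}: for every pair $\alpha,\beta$ the hypothesis gives
$\left(\sum_i\mu_{\alpha,i}^L\right)\left(\sum_j\nu_{\beta,j}^L\right)=\left(\sum_i\nu_{\alpha,i}^L\right)\left(\sum_j\mu_{\beta,j}^L\right)$,
and each side is the $L$-th power sum of a family of $r_\alpha r_\beta\le L_0^2$ products, namely $(\mu_{\alpha,i}\nu_{\beta,j})_{i,j}$ versus $(\nu_{\alpha,i}\mu_{\beta,j})_{i,j}$. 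Since the equality holds for all $L\le L_0^2$ and $L_0^2\ge r_\alpha r_\beta$, \cref{lem1} shows these two families agree up to permutation, so the cross-multiplied identity holds for every $L\in\N$, and an $\alpha$-independent $m_L$ is recovered for each $L$. That short reduction is the entire content of the paper's proof; the per-$\alpha$ polynomial reconstruction you propose is neither needed nor sufficient to replace it.
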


\begin{proof}
Since $m_L\not = 0$,  Eq.~\eqref{eq:lem4} is equivalent to 
\begin{equation}\label{eq:lem4-2}
\sum_{i=1}^{r_\alpha} \mu_{\alpha, i}^L\sum_{j=1}^{r_\beta} \nu_{\beta, j}^L=\sum_{i=1}^{r_\alpha} \nu_{\alpha, i}^L\sum_{j=1}^{r_\beta} \mu_{\beta, j}^L \quad \forall \alpha, \beta
\end{equation}
Since this holds for all $L\le L_0^2$ and $L_0^2\ge r_{\alpha}r_{\beta}$, Lemma \ref{lem1} ensures that the vectors $(\mu_{\alpha,i}\nu_{\beta,j})_{ij}$ and $(\nu_{\alpha,i}\mu_{\beta,j})_{ij}$ are related by a permutation, which in turn implies that Eq.~\eqref{eq:lem4-2} holds for all $L\in \N$, and thus Eq.~\eqref{eq:lem4} also holds for all $L\in \N$.
\end{proof}

We are now ready to prove \cref{thm:mps}. 

\begin{proof}[Proof of \cref{thm:mps}]
Consider $\{A_i\in \mathcal{M}_{D}\}$, $\{C_i\in \mathcal{M}_{D'}\}$  that satisfy \eqref{eq:hyp}. 
First we block sites into groups of size $D!D'!$ so that we can assume  without loss of generality  that both $|\psi_A\ra$ and  $|\psi_C\ra$ are in canonical form with each block being injective. 
From now on we will work with the grouped sites so the real system size $L'$ will be of the form $L'=D!D'! L$ for some $L$. In the following we will consider just such $L$'s.

To each vector $|\psi_A^L\ra$ and $|\psi_C^L\ra $ we  apply $e^{\mi\phi}U \cdot U$ if needed to the different blocks so that those related by such relation become the same (and the phase $e^{\mi\phi}$ is absorbed in the $\lambda$'s, which now become complex numbers). Thus, by applying  $VA_iV^\dagger$ and $WC_i W^\dagger$ with $V, W$ unitaries we can assume without loss of generality that the $A$'s and $C$'s take the form
\bea
A_i= \oplus_{j\in J} \left(A^j_i\otimes {\rm diag}(\lambda^j_1,\ldots,\lambda^j_{r_j})\right) \\
C_i= \oplus_{k\in K}\left( C^k_i\otimes {\rm diag}(\mu^k_1,\ldots,\mu^k_{s_k})\right) , 
\eea
with $\lambda$'s and $\mu$'s complex numbers, and neither of the $A^j_i$'s are related to each other by $e^{\mi\phi}U\cdot U^\dagger$, and the same holds for the $C$'s.

Define the set $K'\subset K$ as follows: $k\in K'$ if there exist $j\in J$, a phase $\phi$ and a unitary $U$ so that $e^{\mi\phi}UA^j_iU^\dagger = C^k_i$ for all $i$. We want to see that $K'=K$.
If this were not the case, by Theorem \ref{thm:li} applied to $\{A^j_i\}_i$ with $j\in J$ and $\{C^k_i\}_i$ with $k\in K\backslash K'$, we would obtain that for all $L\ge 3(D+D')^5$ the set of vectors $\ket{\psi_{A^j}^L}$, $\ket{\psi_{C^k}^L}$ with $j\in J, k\in K\backslash K'$ is linearly independent. On the other hand, by the hypothesis (Eq.~\eqref{eq:hyp}), 
\be
\sum_j\left(\sum_l\left(\lambda^j_l\right)^L\right)\ket{\psi_{A^j}^L}= m_L\sum_k\left(\sum_m\left(\mu^k_m\right)^L\right)\ket{\psi_{C^k}^L} .
\ee
Thus, to obtain a contradiction, we only need to show that for some $k\not \in K'$ there exists an $L$ with 
$3(D+D')^5\le L\le3(D+D')^6$ 
so that $\sum_m (\mu^k_m)^L\neq 0$. This is true by \cref{cor:0s} applied to $\alpha_m=(\mu^k_m)^{3(D+D')^5} $. So $K=K'$. 

Interchanging the roles of $A$ and $C$ one obtains that $K=J$, and,  again by applying  $VA_iV^\dagger$ and $WC_i W^\dagger$ with $V, W$ unitaries,  we obtain the form
\bea
&&A_i= \oplus_{j\in J} \left(A^j_i\otimes {\rm diag}(\lambda^j_1,\ldots,\lambda^j_{r_j})\right) \\
&&C_i= \oplus_{j\in J}\left( A^j_i\otimes {\rm diag}(\mu^j_1,\ldots,\mu^j_{s_j})\right) .
\eea
Now from the linear independence of the vectors $\ket{\psi_{A^j}^L}$ (\cref{thm:li}), and 
 the hypothesis (Eq.~\eqref{eq:hyp}), it follows that for all $j$,
\be
\sum_l\left(\lambda^j_l\right)^L=m_L\sum_m\left(\mu^j_m\right)^L
\ee 
for all $3(D+D')^5\le L\le3(D+D')^6$. Applying \cref{cor:prop} to $\alpha_l= (\lambda^l_m)^{3(D+D')^5}$ and $\beta_m= (\mu^k_m)^{3(D+D')^5}$ we obtain that 
$\ket{\psi_A^L}\propto\ket{\psi_C^L}$ for all $L$'s which are multiples of $3(D+D')^5$. Recalling that the total system size is $L'=D!D'! L$, we obtain the result.
\end{proof}

\section{Proof of the Main result (Theorem 1)}
\label{sec:proof}

We are finally in a position to prove the main result of this paper. 

\begin{proof}[Proof of \cref{thm:main}]
Assume that \cref{thm:main} is not true, that is, for every tensor  $A$ such that $\rho_A^L\geq 0$ for all $L$,  there exists a tensor $B$ of finite size so that $\rho_A^L\propto \sigma_B^L$ for all $L$. 
We will use this  fact to design an algorithm that solves an undecidable problem.

So take a family of matrices $\{A_i\}$ which are a valid input to \cref{prob:undec} (that is, either Case 1 or Case 2 holds). Then consider the following algorithm:

\

\noindent $D'=1$, $L=1$\\

\noindent \texttt{while} not halt \texttt{do} \\

\indent Decide if there exists $|\psi_C^L\ra$ with $C_i$ of form \eqref{eq:pure-purification} and size $D'$ that verifies Eq.~\eqref{eq:hyp}.\\

\indent \texttt{if} it exists, output `Case 1' and halt.\\

\indent \texttt{else} diagonalize  $\rho_A^L$.\\

\indent \indent \texttt{if}  ${\rm mineig}(\rho_A^L)<0$ output  `Case 2' and halt. \\

\indent \indent \texttt{else}  $D'=D'+1$, $L=L+1$\\

\noindent \texttt{end while}

\

Now we show that this algorithm always halts and correctly decides whether $\{A_i\}$ verifies Case 1 or 2. 
If $\{A_i\}$ is in Case 2,  there exists a finite $L_0$ so that  $\rho_A^L\not \ge 0$ for all $L\ge L_0$. Therefore, after $L_0$ iterations the algorithm will find it and halt.

If $\{A_i\}$ is in Case 1,  since we are assuming that Theorem \ref{thm:main} is not true, there exists an MPDO in purification form with matrices  $\{B_{i,e}\in \mathcal{M}_{\sqrt{D'}}\}$ for some finite $D'$ so that \eqref{eq:main-thm} holds. 
By virtue of \cref{thm:mps}, deciding whether such $\{B_{i,e}\}$ exist is a decidable problem, since it consists of deciding whether a system of finitely many polynomial equations (whose unknowns are the entries of $B$)  has a solution over the complex numbers. 
Thus, the algorithm will check it and halt after $D' $ iterations.

%
%
%
%

Finding an algorithm to solve Problem \ref{prob:undec} shows that it is a decidable problem, which contradicts Lemma \ref{lem:undec} and finishes, by contradiction, the proof of Theorem \ref{thm:main}. 
\end{proof}

\section{Conclusions and Outlook}
\label{sec:conclusions}

In conclusion, we have shown that there exist mixed states with a TI MPDO description valid for all system sizes, but for which there does not  exist  a purification which is also translational invariant and valid for all system sizes. 
To prove this result we have relied on the notion of undecidability and on the uniqueness of canonical forms of MPS. 
It is an interesting open question for further investigation whether with this result or with similar techniques, one can attack the old question about the existence of Finitely Correlated States which are not  C$^*$ Finitely Correlated \cite{Fa92}.
\section*{Acknowledgements}

GDLC and JIC acknowledge support from SIQS.
TSC is supported by the Royal Society.
MMW acknowledges support from the CHIST- ERA/BMBF project CQC.
 The opinions expressed in this publication are those of the authors and do not necessarily reflect the views of the John Templeton Foundation.
DPG acknowledges support from MINECO (grant MTM2014-54240-P and ICMAT Severo Ochoa project SEV-2015-0554) and Comunidad de Madrid (grant QUITEMAD+-CM, ref. S2013/ICE-2801).
This work was made possible through the support of grant $\#$48322 from the John Templeton Foundation. 
This project has received funding from the European Research Council (ERC) under the European Union's Horizon 2020 research and innovation programme (grant agreement No 648913).



\end{document}